\newtheorem{definition}{Definition}
\newtheorem{theorem}{Theorem}
\newcommand{\bbR}{\mathbb{R}}
\newcommand{\bbZ}{\mathbb{Z}}
\newcommand{\scri}{\mathcal{I}}
\newcommand{\calM}{\mathcal{M}}
\newcommand{\tr}{\operatorname{tr}}
\newcommand{\del}{\partial}
\newcommand{\artanh}{\operatorname{artanh}}
\begin{document}


\title{Horizon complementarity in elliptic de Sitter space}

\author{Lucas Hackl}
\email{lucas.hackl@psu.edu}
\affiliation{Institute for Gravitation \& the Cosmos and Physics Department, Penn State, University Park, PA 16802, USA}

\author{Yasha Neiman}
\email{yashula@gmail.com}
\affiliation{Perimeter Institute for Theoretical Physics, 31 Caroline Street N, Waterloo, ON, N2L 2Y5, Canada}

\date{\today}

\begin{abstract}
We study a quantum field in elliptic de Sitter space $dS_4/\bbZ_2$ -- the spacetime obtained from identifying antipodal points in $dS_4$. We find that the operator algebra and Hilbert space cannot be defined for the entire space, but only for observable causal patches. This makes the system into an explicit realization of the horizon complementarity principle. In the absence of a global quantum theory, we propose a recipe for translating operators and states between observers. This translation involves information loss, in accordance with the fact that two observers see different patches of the spacetime. As a check, we recover the thermal state at the de Sitter temperature as a state that appears the same to all observers. This thermal state arises from the same functional that, in ordinary $dS_4$, describes the Bunch-Davies vacuum.
\end{abstract}

\pacs{04.62.+v,04.70.Dy,03.65.-w}

\maketitle

\section{Introduction and summary}

Causal horizons hide parts of spacetime from an observer's view. In the presence of strong curvature, one often finds that \emph{no} observer has access to the entire spacetime. This is the case for black holes, as well as for de Sitter space with its cosmological horizons. A useful prejudice in physics says that a theory should only contain objects that are in principle measurable. By this standard, if no one can see the entire world, then we shouldn't speak about its quantum state. Moreover, if the world-picture available to each observer is the maximal attainable one, then there shouldn't be any \emph{additional} information outside the observer's patch: the descriptions of the different causal patches should somehow be equivalent. In particular, the two sides of each horizon should contain the same information. Together, these ideas form the horizon complementarity principle \cite{Susskind:1993if,Dyson:2002pf}, proposed as a resolution to the black hole information paradox \cite{Hawking:1976ra}.

In normal circumstances, quantum field theory (QFT) seems to know nothing about these considerations: one can always define a Hilbert space on a Cauchy slice, regardless of whether the entire slice is observable. Horizon complementarity is then delegated to the domain of quantum gravity, which we do not understand well enough to explore the idea in depth. 

In this paper, we explore a peculiar spacetime, in which something very similar to horizon complementarity applies to ordinary QFT. The spacetime is elliptic de Sitter space $dS_4/\bbZ_2$ -- the quotient of ordinary de Sitter space $dS_4$, i.e. the hyperboloid $x^\mu x_\mu=1$ in $\bbR^{1,4}$, by the antipodal map $x\leftrightarrow -x$. It was proposed in \cite{Parikh:2002py,Parikh:2004ux,Parikh:2004wh} as a preferred alternative to ordinary $dS_4$ for understanding quantum gravity with a positive cosmological constant. Elliptic de Sitter space is globally not time-orientable, but it doesn't contain closed timelike loops. Thus, the non-orientability cannot be detected by any observer. In fact, one can make the case that $dS_4/\bbZ_2$ is no less realistic than global $dS_4$. In all the applications to real-world cosmology, one only uses a half of $dS_4$. Instead of ignoring the other half, one might as well identify it with the first one.

The antipodal map in $dS_4$ interchanges the inside and the outside of every cosmological horizon. Thus, in $dS_4/\bbZ_2$, the two sides of each horizon are the same -- globally, the horizon has only one side. This is a very literal realization of the idea that the two sides should encode the same information. More surprisingly, we will see that the global non-orientability in time precludes the standard quantization of a field on $dS_4/\bbZ_2$. Instead, we can only quantize in a region where a positive time direction can be chosen consistently. In our geometry, such regions coincide with the causal patches of observers. Thus, the system is realizing another element of horizon complementarity -- a quantum description only exists for individual causal patches. 

This puts us in a fascinating situation. On one hand, we are outside conventional quantum mechanics, making contact with some of the deep issues in quantum gravity. On the other hand, we are dealing with ordinary, even free, QFT, where the equations are under full control. The immediate question to ask is -- if the observers are all associated with different Hilbert spaces, can we relate their world-pictures? We will present and motivate a particular recipe for translating operators and states between observers, making use of a global structure that is more primitive than a Hilbert space or an operator algebra. 

The above statements apply to any QFT on $dS_4/\bbZ_2$ (although, to be well-defined on this background, the theory must be parity-invariant \cite{Neiman:2014npa}). For simplicity, we will mostly consider a conformally coupled massless scalar $F$, with field equation:
\begin{align}
 \Box F = 2F \ , \label{eq:F_diff}
\end{align}
where we used the Ricci scalar $R=12$ in de Sitter space of unit radius. For this particular theory, we will be able to go one step further and ask: what state will appear the same to all observers under our state-translation recipe? A non-trivial calculation shows that the answer is the thermal state at $T = 1/2\pi$, which is the correct temperature for the de Sitter cosmological horizon \cite{Gibbons:1977mu}. We view this as a successful check of our proposed framework.

The rest of the paper is structured as follows. In section \ref{sec:geometry}, we review in further detail the geometry of $dS_4/\bbZ_2$ and the causal patches associated with observers. In section \ref{sec:field}, we review the solutions of the field equation \eqref{eq:F_diff}, which we parametrize in terms of asymptotic boundary data. In section \ref{sec:quantum}, we show that this solution space is not a phase space, since it doesn't admit Poisson brackets that preserve the de Sitter symmetries. One can only define Poisson brackets in a causal patch, after breaking the symmetry and choosing an observer. Upon quantization, this leads directly to a separate Hilbert space for each observer. In section \ref{sec:operators}, we propose a recipe for translating operators between these Hilbert spaces. In section \ref{sec:HH}, we apply this recipe to produce an operator that looks the same to all observers, based on the Euclidean action functional. In section \ref{sec:states}, we use the result to motivate a translation recipe for states. The above operator then becomes the de Sitter thermal state. In section \ref{sec:discuss}, we conclude and discuss future directions.

\section{Elliptic de Sitter space} \label{sec:geometry}

\subsection{Topology, observers and horizons}

We define de Sitter space $dS_4$ as the hyperboloid of unit spacelike radius in 4+1d flat space $\bbR^{1,4}$ with metric signature $(-,+,+,+,+)$:
\begin{align}
  dS_4 = \{x^\mu\in\bbR^{1,4}|x^\mu x_\mu=1\} \ .
\end{align}
The isometry group $SO(4,1)$ of $dS_4$ is then realized as the rotation group in the 4+1d space. The asymptotic boundary of $dS_4$ is composed of two conformal 3-spheres -- one at past infinity ($\scri^-$) and the other at future infinity ($\scri^+$). In the 4+1d picture, $\scri^\pm$ correspond to the 3-spheres of future-pointing and past-pointing null directions. Local Weyl transformations on $\scri^\pm$ correspond to rescalings $\ell^\mu\to \lambda\ell^\mu$ of the lightcone in $\bbR^{1,4}$; for a field on $\scri^\pm$ with conformal weight $\Delta$, this induces the rescaling $f\to \lambda^{-\Delta}f$.

Elliptic de Sitter space $dS_4/\bbZ_2$ is obtained by identifying antipodal points $x^\mu\leftrightarrow-x^\mu$. See figures \ref{fig:edS-3d}-\ref{fig:edS-2d}. The resulting spacetime still looks locally like $dS_4$, but its global properties are different. In particular, $\scri^-$ becomes identified with $\scri^+$, so the asymptotic boundary is now a single 3-sphere $\scri$. Topologically, de Sitter space $dS_4$ is a cylinder $\bbR\times S^3$, where the $S^3$ boundaries correspond to $\scri^\pm$. One way to obtain elliptic de Sitter space $dS_4/\bbZ_2$ is to cut the cylinder in half along an equatorial $S^3$ slice, and then fold this $S^3$ into a $S^3/\bbZ_2$ by identifying antipodal points. See figure \ref{fig:edS_intrinsic}. Thus, we can single out two preferred types of spatial slices in $dS_4/\bbZ_2$. One is the asymptotic slice $\scri$, with topology $S^3$. The other is an equatorial slice, with topology $S^3/\bbZ_2$ (there is a 4d set of such slices, corresponding to spacelike 4-planes through the origin in $\bbR^{1,4}$).
\begin{figure}[thp]
\centering
\includegraphics{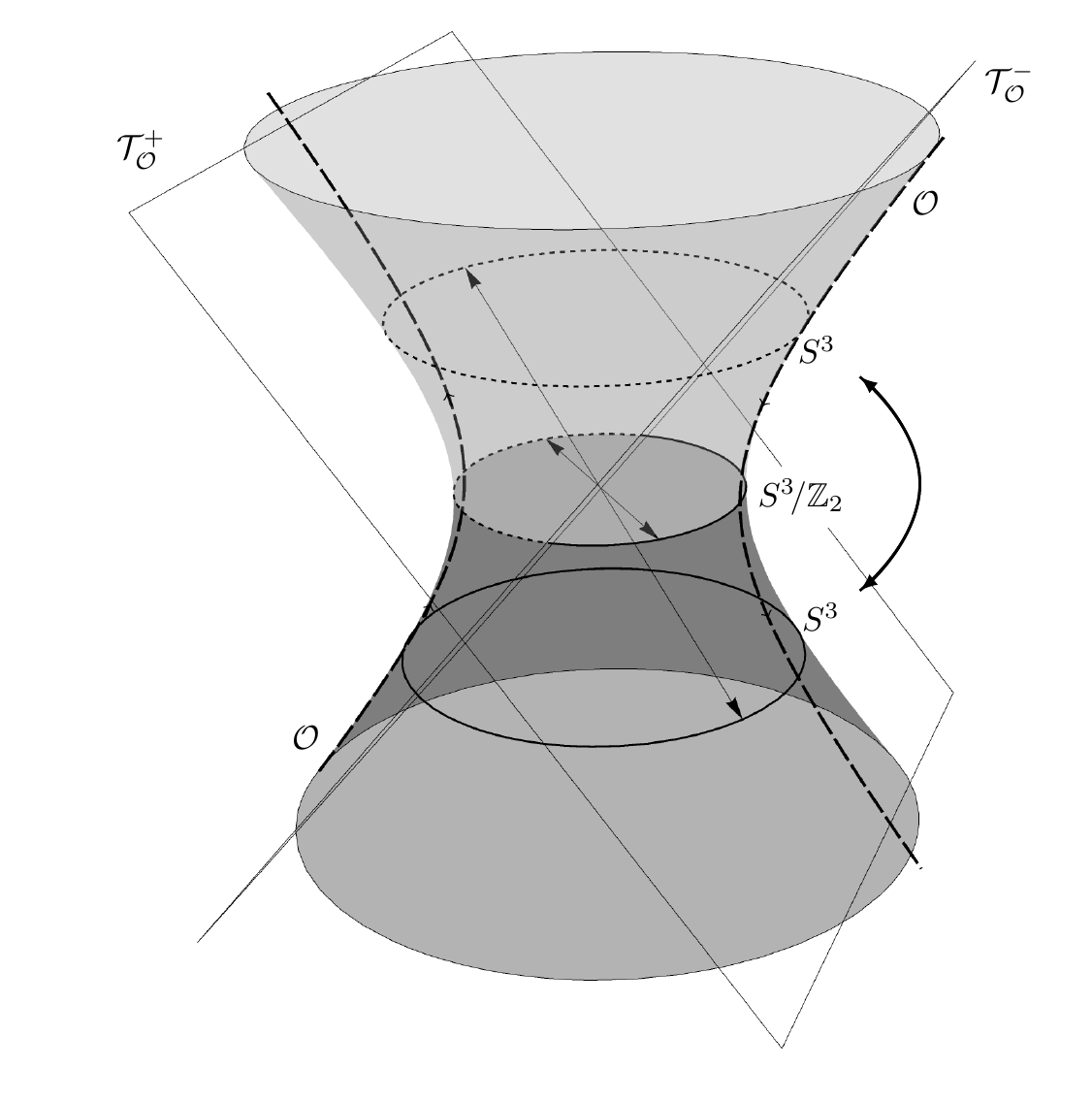}
\caption{This illustrates $dS_4$ embedded in $\mathbb{R}^{4,1}$. To get $dS_4/\mathbb{Z}_2$, the upper half (light gray) and the lower part (dark gray) are identified via the antipodal map, illustrated by black arrows. We also show an observer $\mathcal{O}$ together with his future and past horizons $\mathcal{T}^\pm_{\mathcal{O}}$. The rings in the figure represent $3$-spheres; the central ring becomes an $S^3/\mathbb{Z}_2$ due to the antipodal identification.}
\label{fig:edS-3d}
\end{figure}
\begin{figure}[thp]
\centering
\includegraphics{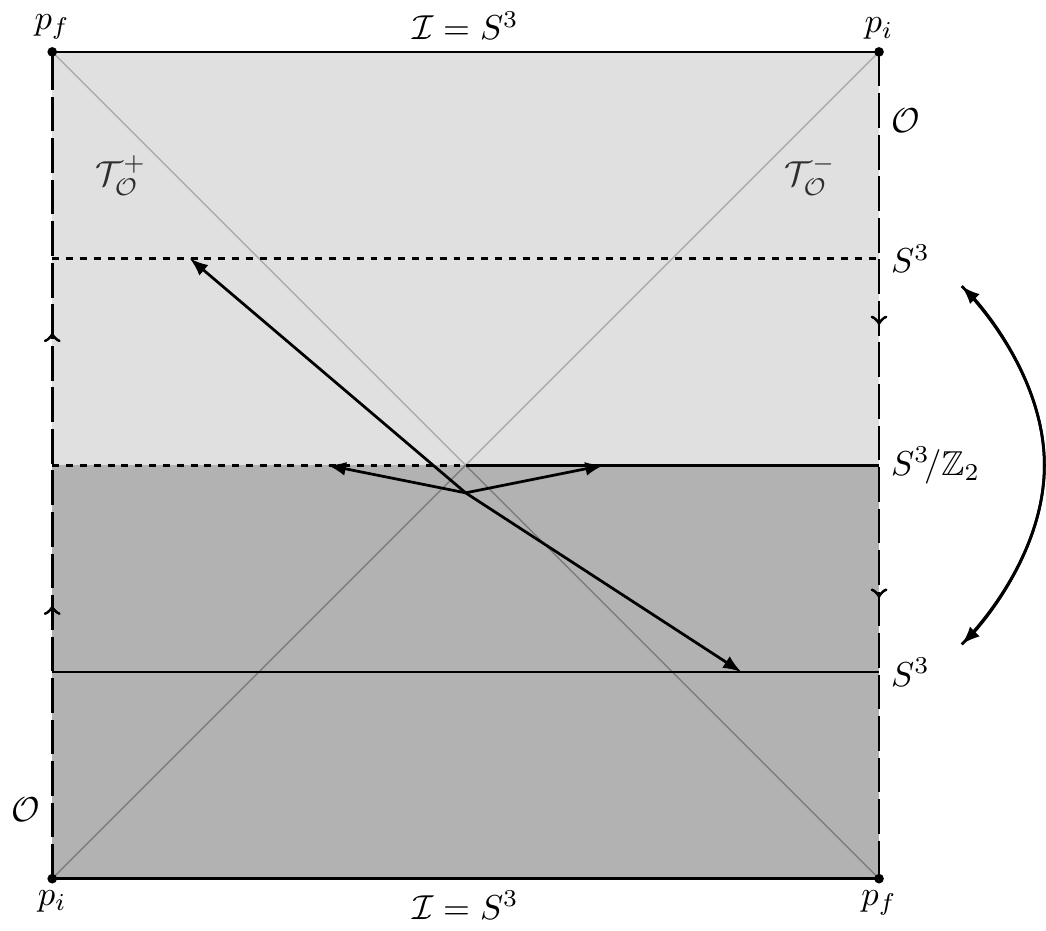}
\caption{This is the Penrose diagram corresponding to figure \ref{fig:edS-3d}. It shows the full $dS_4$ spacetime, along with the antipodal identification that turns it into $dS_4/\mathbb{Z}_2$, again illustrated by arrows. A point in this diagram represents a $2$-sphere, except on the left and right boundary (worldline of the observer $\mathcal{O}$), where a point represents an actual point. The Penrose diagram also shows the identified future and past infinity $\mathcal{I}$.}
\label{fig:edS-2d}
\end{figure}
\begin{figure}[thp]
\centering
\includegraphics{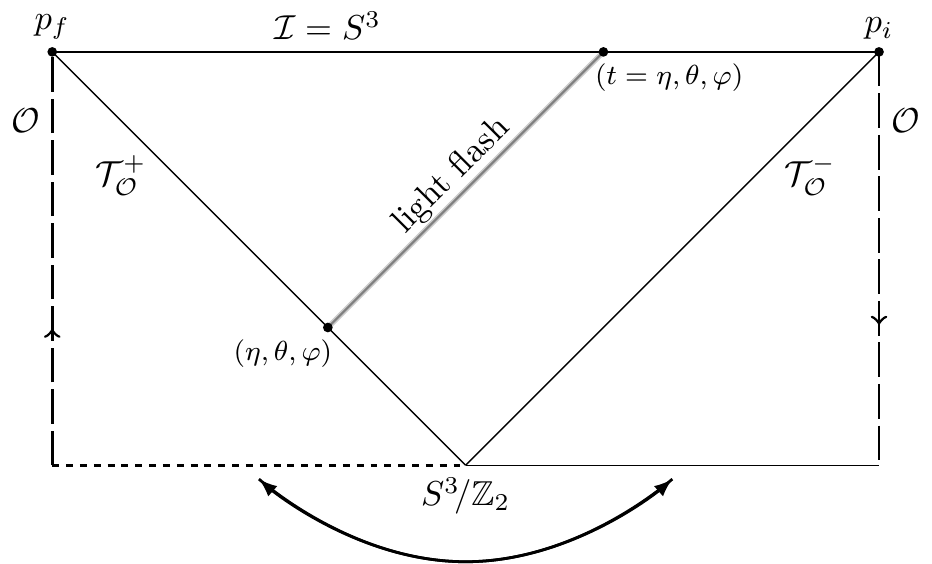}
\caption{This Penrose diagram shows all of $dS_4/\mathbb{Z}_2$, without embedding into $dS_4$. It has the topology of a cylinder where one of the two $S^3$ boundaries was folded into an $S^3/\mathbb{Z}_2$. Thus, elliptic de Sitter space has only a single boundary $\mathcal{I}\cong S^3$. The diagram also shows a light flash that starts on the observer's future horizon at $(\eta,\theta,\phi)$ and hits $\scri$ at $(t=\eta,\theta,\varphi)$, thus relating the coordinate $\eta$ on the horizon with $t$ on $\scri$.}
\label{fig:edS_intrinsic}
\end{figure}

An observer in $dS_4$ follows a timelike worldline, which begins at some point $p_i\in\scri^-$ and ends at some point $p_f\in\scri^+$. These two points can be encoded as two null directions in $\bbR^{1,4}$. The lightcones of $p_{i,f}$ are the observer's past and future cosmological horizons, respectively. They refocus at the antipodal points $-p_i$ and $-p_f$ on $\scri^\pm$. The horizons intersect at a 2-sphere, known as the bifurcation surface. They divide the spacetime into four quadrants; the quadrant containing the observer's worldline is known as his static patch. The observer in $dS_4$ can see half of the spacetime, up to his future horizon, and can influence half of the spacetime, up to his past horizon. Though we will picture the observer as traveling along the \emph{geodesic} from $p_i$ to $p_f$, this is not essential: any worldline with the same two endpoints is associated with the same horizons. 

The points $p_{i,f}$ also single out a particular Killing field $\xi^\mu$ from the de Sitter symmetry group $SO(4,1)$. The vector field $\xi^\mu$ is future-pointing timelike in the static patch, past-pointing timelike in its antipode, spacelike in the past and future quadrants, null on the horizons, and vanishing on the bifurcation surface. See figure \ref{fig:killing}. The geodesic from $p_i$ to $p_f$ runs along $\xi^\mu$. In the 4+1d picture, $\xi^\mu$ corresponds to the boost generator in the plane of the two null directions $p_{i,f}$. The full residual symmetry after picking the $p_{i,f}$ is $\bbR\times O(3)$, where the $\bbR$ corresponds to the ``time translations'' generated by $\xi^\mu$, and the $O(3)$ is the group of spatial rotations and reflections around the $p_i\rightarrow p_f$ geodesic.
\begin{figure}[thp]
\centering
\includegraphics{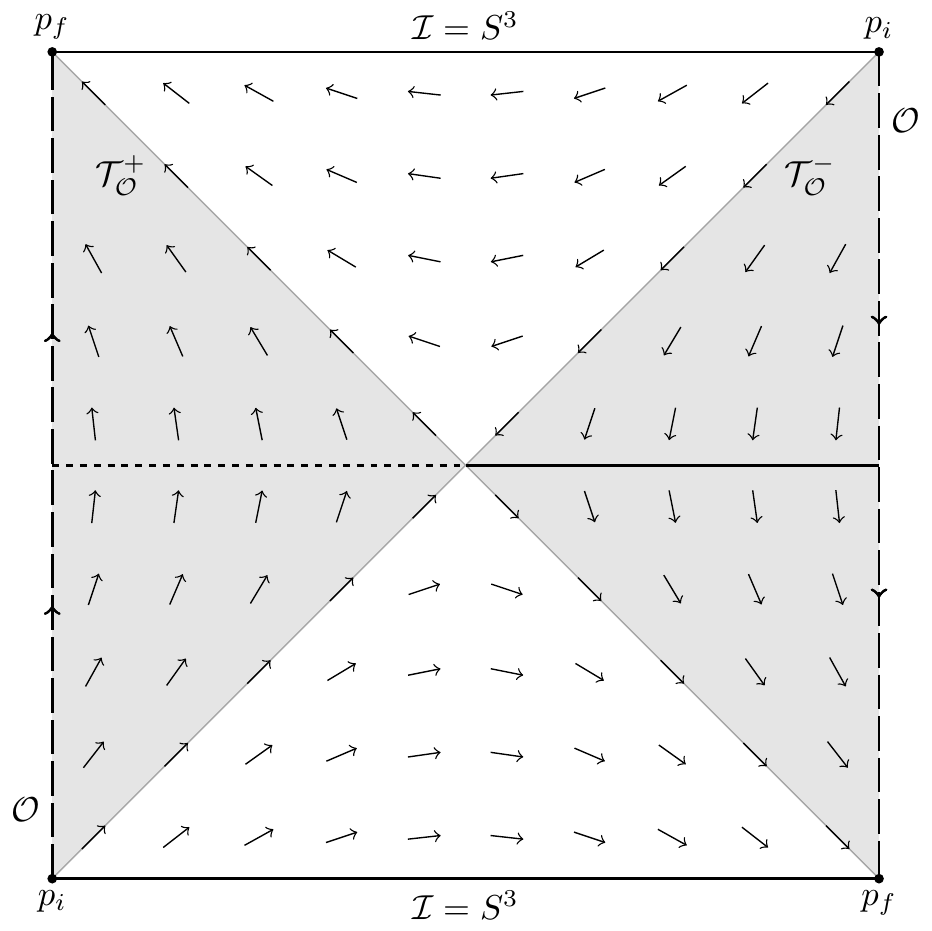}
\caption{This Penrose diagram shows the Killing vector field $\xi$ that is timelike in the observer's causal diamond (dark region) and spacelike in its complement. It becomes null on the observer's future and past horizons $\mathcal{T}^\pm_\mathcal{O}$. On the horizon, we have $\xi=\partial_\eta$ in the coordinates $(\eta,\theta,\varphi)$; on $\scri$, we have $\xi=\partial_t$ in the coordinates $(t,\theta,\varphi)$.}
\label{fig:killing}
\end{figure}

In $dS_4/\bbZ_2$, the two endpoints $p_{i,f}$ of the observer's worldline are both on the same asymptotic 3-sphere $\scri$. We can still specify one point as initial and the other as final, thus choosing a direction for the observer's proper time. The lightcones of $p_{i,f}$ still function as horizons, each one now refocusing back at its starting point. Each horizon is antipodally identified with itself, so that the only independent horizon points are from $p_i/p_f$ up to the bifurcation surface. The bifurcation surface is also identified with itself, becoming an $S^2/\bbZ_2$. The two sides of each horizon are identified with each other. As a result, the observer can now see the entire spacetime except his own future horizon, and can influence the entire spacetime except his past horizon. We see that this geometry fits well with the ideas of horizon complementarity. On one hand, the two sides of each horizon carry the same information. On the other hand, different observers have access to \emph{almost} the same data: each one sees the same spacetime, but with a different surface subtracted.

\subsection{Observer-adapted coordinates} \label{sec:geometry:coordinates}

Consider an observer in $dS_4/\bbZ_2$. We can choose a frame in $\bbR^{1,4}$ so that the endpoints $p_{i,f}$ of the observer's worldline correspond to the null directions $(1,\mp 1,0,0,0)$, where we've used the antipodal map to make both null directions future-pointing. We can now choose coordinates in $dS_4/\bbZ_2$ adapted to our observer, e.g. the static coordinates in his static patch. In the following, we will only need explicit coordinates on $\scri$ and on the observer's future horizon. 

We can represent $\scri$ as the following section of the lightcone in $\bbR^{1,4}$:
\begin{align}
 \ell^\mu(t,\theta,\varphi)=(\cosh t,\sinh t,\cos\theta,\sin\theta\cos\varphi,\sin\theta\sin\varphi) \ , \label{eq:ell}
\end{align}
where $(t,\theta,\varphi)$ are coordinates on $\scri$, adapted to the residual $\bbR\times O(3)$ symmetry. $t$ ranges from $-\infty$ at $p_i$ to $\infty$ at $p_f$. The coordinates \eqref{eq:ell} are singular at $p_{i,f}$, making the topology of $\scri$ appear like a cylinder $\bbR\times S^2$. In fact, the metric of the section \eqref{eq:ell} is also that of a cylinder:
\begin{align}
 d\ell^\mu d\ell_\mu &= dt^2 + d\theta^2 + \sin^2\theta\, d\varphi^2 \quad ; \quad d^3\ell = dt\,d(\cos\theta)\,d\varphi \ . \label{eq:d_ell}
\end{align}
When performing calculations for the given observer, this will be our chosen representative of the conformal class of metrics on $\scri$. Translations in the ``time'' coordinate $t$ are generated by the Killing field $\xi^\mu$, which generates time translations in the static patch. On $\scri$, the vector $\xi^\mu$ and the coordinate $t$ are spacelike.

Similarly, we will use coordinates $(\eta,\theta,\varphi)$ on the observer's future horizon, such that a point on the horizon is represented in $\bbR^{1,4}$ as:
\begin{align}
 x^\mu = (e^\eta,e^\eta,\cos\theta,\sin\theta\cos\varphi,\sin\theta\sin\varphi) \ . \label{eq:horizon}
\end{align}
The coordinate $\eta$ ranges from $-\infty$ at the bifurcation surface to $\infty$ at $p_f$. Again, the Killing field $\xi^\mu$ generates translations in $\eta$, while the $O(3)$ symmetry acts on $(\theta,\varphi)$. On the horizon, $\xi^\mu$ and the coordinate $\eta$ are null. The coordinates \eqref{eq:horizon} are singular at $p_f$ and at the bifurcation surface. Our coordinates for the horizon and $\scri$ are directly related: a radial light flash from the horizon point $(\eta,\theta,\varphi)$ will hit $\scri$ at $(t,\theta,\varphi)$, where $t=\eta$.

\section{Conformally coupled massless scalar} \label{sec:field}

We now introduce a real scalar field $F$, subject to the field equation \eqref{eq:F_diff}. There are two ways to define such a field on $dS_4/\bbZ_2$. One is to start with an antipodally even field in $dS_4$, which will then live in the trivial bundle on $dS_4/\bbZ_2$. The other is to start with an antipodally odd field, which will live on a twisted bundle on $dS_4/\bbZ_2$, changing its sign along incontractible cycles. We will focus on the antipodally even case. 

To discuss asymptotic boundary data, we follow the usual procedure of conformal completion: we introduce a time coordinate $z$ such that $z\to 0$ on $\scri$, and the rescaled metric $z^2 g_{\mu\nu}$ is regular at $z=0$. In a neighborhood of $\scri$, we can use $z$ together with $(t,\theta,\varphi)$ from section \ref{sec:geometry:coordinates} as coordinates on $dS_4/\bbZ_2$. The general solution of the field equation \eqref{eq:F_diff} in ordinary $dS_4$ has the asymptotic behavior:
\begin{align}
 F = z\,\Phi(t,\theta,\varphi) + z^2\,\Pi(t,\theta,\varphi) + \dots \ ,
\end{align}
where the higher-order terms in $z$ are all determined by the Dirichlet boundary data $\Phi$ and the Neumann boundary data $\Pi$. We view $\Phi$ and $\Pi$ as fields on $\scri$, with respective conformal weights 1 and 2. For an antipodally even/odd field $F$, the $\Phi$/$\Pi$ boundary data vanishes \cite{Ng:2012xp}. Thus, in our $dS_4/\bbZ_2$ setup, a solution $F(x)$ is uniquely determined by the Neumann data $\Pi$:
\begin{align}
 F(x)=\int_{\scri}\Pi(\ell)\,G(x;\ell)\,d^3\ell \ . \label{eq:F_Pi}
\end{align}
Here, $\ell^\mu(t,\theta,\varphi)$ stands for a boundary point in the parametrization \eqref{eq:ell}, and the boundary-to-bulk propagator $G(x;\ell)$ reads \cite{Neiman:2014npa}:
\begin{align}
 G(x;\ell)=\frac{1}{4\pi}\,\delta(x\cdot\ell) \ , \label{eq:propagator}
\end{align}
where $x\cdot\ell \equiv x_\mu\ell^\mu$ is the scalar product in $\bbR^{1,4}$.

We can thus identify the space of solutions $\Gamma$ with the space of Neumann boundary data $\Pi(\ell)$ on the 3-sphere $\scri$. In observer-adapted coordinates, we can parametrize this space of boundary data using Fourier modes in $t$ and spherical harmonics in $(\theta,\varphi)$:
\begin{align}
 \Pi(t,\theta,\varphi) = \int^\infty_0\frac{d\omega}{2\pi}\sum^\infty_{l=0}\sum^l_{\,m=-l}\!\left(e^{-i\omega t}\,Y_{lm}(\theta,\varphi)\,c_{lm}(\omega) + c.c.\right) \, .
 \label{eq:Pi}
\end{align}
The boundary data is now parametrized by the complex coefficients $c_{lm}(\omega)$. Due to the $\bbR\times O(3)$ symmetry, each $(\omega,l,m)$ component of \eqref{eq:Pi} yields a bulk solution with frequency $\omega$ and angular momentum numbers $(l,m)$ in the static patch. Strictly speaking, \eqref{eq:Pi} is not a function on $\scri\cong S^3$, but on the $\bbR\times S^2$ obtained by removing the two points $p_{i,f}$. However, this distinction will not bother us. Any function on $S^3$ is also a function on $\bbR\times S^2$; conversely, any function on $\bbR\times S^2$ is a singular, generalized function on $S^3$, which can be viewed as the limit of a sequence of honest functions.

For more general field theories, one will similarly find that only one of the (Dirichlet/Neumann) boundary data on $\scri$ is independent. However, the other type of boundary data will not in general vanish.

\section{Phase space and quantization} \label{sec:quantum}

\subsection{There is no dS-invariant symplectic form}

Usually, the space of boundary data (or the space of solutions) comes naturally equipped with a symplectic form, which defines the Poisson brackets and makes it into a phase space. If we start with the phase space $\Gamma_{dS_4}$ of ordinary $dS_4$ and restrict to the submanifold $\Gamma$ of antipodally symmetric fields, the restriction of the standard symplectic form vanishes. Put differently, our solution space $\Gamma$ in $dS_4/\bbZ_2$ is a Lagrangian submanifold of the ordinary phase space $\Gamma_{dS_4}$.

For the particular field theory \eqref{eq:F_diff}, this result is immediate. Indeed, the asymptotic boundary data $\Phi$ and $\Pi$ are canonical conjugates in ordinary $dS_4$, while in $dS_4/\bbZ_2$, one of them is constrained to vanish. For more general field theories, the result is easier to see in terms of an equatorial $S^3$ slice in the bulk, which becomes an $S^3/\bbZ_2$ under the antipodal identification. There, the standard symplectic form reads:
\begin{align}
 \Omega(\delta_1,\delta_2) = \int dV \left(\delta_1 F(x)\delta_2\dot F(x) - \delta_2 F(x)\delta_1\dot F(x) \right) \ . \label{eq:Omega}
\end{align}
This form vanishes on the subspace of antipodally even/odd solutions, since the $\dot F$/$F$ contributions (respectively) cancel at antipodal points. Alternatively, if one thinks in terms of $dS_4/\bbZ_2$, one would try to integrate \eqref{eq:Omega} over an $S^3/\bbZ_2$ slice. However, the expression is then ill-defined: due to the global non-orientability, one cannot consistently choose a positive time direction along which to define $\dot F$.

Not only does the standard symplectic form vanish, but it's also impossible to find a different one that would respect the de Sitter symmetries:
\begin{theorem}
 Consider the space of boundary data $f(\ell)$ with conformal weight $\Delta$ on $\scri$ (e.g. $\Pi(\ell)$ with $\Delta=2$). There is no symplectic form on this space that is invariant under the $SO(4,1)$ de Sitter group.
\end{theorem}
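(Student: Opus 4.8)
The plan is to recast the statement as a question about invariant bilinear forms and then exploit the embedding-space geometry of $\scri$. Since the space of boundary data is a (real) vector space and a symplectic form is constant in field space, the task reduces to classifying the $SO(4,1)$-invariant bilinear forms $\Omega(\delta_1 f,\delta_2 f)$. A symplectic form is in particular a nonzero \emph{antisymmetric} such form, so it suffices to prove the sharper statement that \emph{every} $SO(4,1)$-invariant bilinear form on weight-$\Delta$ data is symmetric; then no antisymmetric, and a fortiori no nondegenerate, invariant form can exist.

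First I would represent a candidate form by its integral kernel. By the Schwartz kernel theorem, any continuous bilinear form can be written as
\[
 \Omega(\delta_1 f,\delta_2 f)=\int d^3\ell \int d^3\ell'\;K(\ell,\ell')\,\delta_1 f(\ell)\,\delta_2 f(\ell') \ ,
\]
for some bi-distribution $K$ on $\scri\times\scri$. Invariance of $\Omega$ under $SO(4,1)$, together with the conformal weight carried by the measure $d^3\ell$, forces $K(\ell,\ell')$ to be an $SO(4,1)$-covariant kernel of definite conformal weight in each slot, namely the weight conjugate to $\Delta$, so that the two pairings with weight-$\Delta$ data are separately invariant. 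Crucially, antisymmetry of $\Omega$ is equivalent to antisymmetry of $K$ under the exchange $\ell\leftrightarrow\ell'$, so the entire question reduces to the exchange symmetry of the invariant kernel.

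The key step is geometric. In the embedding picture, a point of $\scri$ is a null ray $\ell^\mu$ in $\bbR^{1,4}$, as in \eqref{eq:ell}, and by the first fundamental theorem of invariant theory the only scalar a pair of null rays can form under $SO(4,1)$ is their inner product $\ell\cdot\ell'$ (the self-products $\ell\cdot\ell$ and $\ell'\cdot\ell'$ vanish, and two vectors cannot feed the $5$-index $\epsilon$ tensor). Hence any invariant kernel is a distributional function of the single variable $\ell\cdot\ell'$; fixing the conformal weight pins it down, away from coincidence, to a power $(\ell\cdot\ell')^{\Delta-3}$. But $\ell\cdot\ell'=\ell'\cdot\ell$ is manifestly \emph{symmetric} under exchange of the two points. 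Therefore every $SO(4,1)$-invariant kernel — and thus every invariant bilinear form — is symmetric, and no invariant antisymmetric form exists.

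The main obstacle is the distributional fine print rather than the conceptual core. At lightlike separation $\ell\cdot\ell'=0$ (which includes coincident boundary points) the power-law kernel is singular and can be accompanied by contact terms, and for special integer values of $\Delta$ the weight-$\Delta$ representation becomes reducible, so the covariant kernel may develop poles or admit extra local invariants. I would dispatch this by checking that every such term is still a symmetric function of $\ell\cdot\ell'$ — delta functions supported on the coincidence locus and their derivatives are symmetric — so the symmetry conclusion is stable. As an independent cross-check I would pass to the mode expansion \eqref{eq:Pi}: invariance under the residual $\bbR\times O(3)$ already diagonalizes the form in $(\omega,l,m)$, and imposing the remaining non-compact generators of $SO(4,1)$, which raise and lower $l$, yields recursion relations whose only solution is symmetric.
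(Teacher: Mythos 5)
Your proposal is correct and takes essentially the same route as the paper's own proof: represent the candidate form by an integral kernel, use $SO(4,1)$ invariance together with the conformal weight to force the kernel to depend only on the sole invariant $\ell\cdot\ell'$, and conclude that no antisymmetric (hence no symplectic) invariant form exists because $\ell\cdot\ell'$ is symmetric under exchange. Your extra care with the Schwartz kernel theorem, contact terms on the coincidence locus, and the mode-expansion cross-check simply flesh out the paper's terser version of the same argument.
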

\begin{proof}
Let us use the representation of $\scri$ via light-like vectors $\ell^\mu\in\bbR^{1,4}$. Now, any symplectic form can be written as:
\begin{align}
 \Omega(f_1,f_2) = \int d^3\ell_1 d^3\ell_2\, f_1(\ell_1)\, f_2(\ell_2) \,G(\ell_1,\ell_2) \ ,
\end{align}
where the kernel $G(\ell_1,\ell_2)$ has conformal weight $3-\Delta$ in each argument, and is antisymmetric under $\ell_1\leftrightarrow\ell_2$. For $\Omega$ to be invariant under $SO(4,1)$, the same must be true of $G$. However, the only independent invariant that can be constructed from $\ell_1,\ell_2$ is the inner product $\ell_1\cdot\ell_2$, which is symmetric rather than antisymmetric under $\ell_1\leftrightarrow\ell_2$.
\end{proof}

\subsection{Observer-dependent phase space and quantization}

In the absence of a global symplectic form, we can still use the standard symplectic form in an observer's static patch. Consider an equatorial $S^3/\bbZ_2$ slice containing the observer's $S^2/\bbZ_2$ bifurcation surface, but with the bifurcation surface excised. The remaining region, with the topology of an open ball, is precisely the region of space that is visible to the observer. In this region, we can consistently choose a unit timelike normal (pointing in the direction of $\xi^\mu$), which becomes discontinuous at the bifurcation surface. We can then use this normal to define $\dot F$, and plug that into the symplectic form \eqref{eq:Omega}. The result agrees with the standard symplectic form in $dS_4$, restricted to the static patch.

The symplectic form thus constructed is genuinely observer-dependent. This is easiest to see for two observers whose bifurcation surfaces share an $S^3/\bbZ_2$ equatorial slice (we will say in this case that the observers ``share the $S^3/\bbZ_2$''). The two symplectic forms are then given by the integral \eqref{eq:Omega} over the same $S^3/\bbZ_2$ slice, but with opposite signs for $\dot F$ in a certain region. On the other hand, this picture shows that formally, the two observers agree on phase space volumes: in the $(F,\dot F)$ basis on the shared $S^3/\bbZ_2$, their symplectic forms are block-diagonal, and only differ in signs. While not every two observers share an $S^3/\bbZ_2$, we can always find a third observer that shares an $S^3/\bbZ_2$ with each one. Thus, the notion of phase space volume is common to all observers. 

In quantum theory, the Poisson brackets, i.e. the inverse of the symplectic form, translate directly into an operator algebra, of which the Hilbert space is a representation. Thus, our statements above concerning symplectic forms translate directly into the quantum context. There is no global operator algebra or Hilbert space, since there is no observer-independent symplectic form. Instead, each observer assigns a different algebra to the field operators $\hat F(x)$, leading to a different Hilbert space for each observer. Individually, each of these Hilbert spaces is just the ordinary Hilbert space in a static patch of $dS_4$. A state in each Hilbert space can be represented as a functional $\psi[F(x)]$ over the field values on an $S^3/\bbZ_2$. However, even for observers that share an $S^3/\bbZ_2$, one cannot compare these states directly, i.e. there is no natural map between the Hilbert spaces, since the operator algebras are different. For instance, the physical meaning of the derivative $\delta/\delta F$ acting on $\psi[F]$ is observer-dependent, since different observers assign different signs to the momentum $\dot F$.

\subsection{Decomposition into harmonic oscillators}

Consider an observer, with his associated symmetry group $\bbR\times O(3)$ of time translations and rotations. The free field \eqref{eq:F_diff} can be decomposed into modes with frequency $\omega$ and angular momentum numbers $(l,m)$. Each mode behaves as an independent harmonic oscillator. Since our observer-dependent symplectic form is just the standard symplectic form in the static patch, it induces the standard commutation relations on the associated raising and lowering operators:
\begin{align}
 \begin{split}
   &[\hat a_{lm}(\omega), \hat a^\dagger_{l'm'}(\omega')] = 2\pi\delta(\omega - \omega')\delta_{ll'}\delta_{mm'} \ ; \\
   &[\hat a_{lm}(\omega), \hat a_{l'm'}(\omega')] = [\hat a^\dagger_{lm}(\omega), \hat a^\dagger_{l'm'}(\omega')] = 0 \ .
 \end{split} \label{eq:a_commutators}
\end{align}

Now, recall the decomposition \eqref{eq:Pi} of the boundary data on $\scri$ in terms of coefficients $c_{lm}(\omega)$. From symmetry, these coefficients must coincide with the  classical counterparts $a_{lm}(\omega)$ of the lowering operators $\hat a_{lm}(\omega)$, up to normalization:
\begin{align}
 a_{lm}(\omega) = N_l(\omega)c_{lm}(\omega) \ . \label{eq:a_c}
\end{align}
The complex phases of the normalization coefficients $N_l(\omega)$ are arbitrary. Their absolute values can be computed by evolving the boundary data \eqref{eq:Pi} into the bulk, and requiring that the energy in the static patch (with respect to the Killing field $\xi^\mu$) takes the form:
\begin{align}
 E = \int^\infty_0\frac{d\omega}{2\pi}\,\sum_{lm}\omega\left|a_{lm}(\omega)\right|^2 \ . \label{eq:E}
\end{align}
The energy is easiest to evaluate on the horizon, using the coordinates \eqref{eq:horizon}. The calculation is given in Appendix \ref{app:N}. The result reads:
\begin{align}
 \left|N_l(\omega)\right|^2 =
  \left\{\begin{array}{lc}
    \displaystyle \frac{1}{2\omega} \prod_{k=1}^{l/2}\frac{(2 k-1)^2+\omega ^2}{(2 k)^2+\omega ^2} \quad & l\text{ even}\\
    \displaystyle \frac{\omega}{2(1+\omega^2)} \prod_{k=1}^{(l-1)/2}\frac{(2 k)^2+\omega ^2}{(2 k+1)^2+\omega ^2} \quad & l\text{ odd}
  \end{array}\right.\,. \label{eq:N}
\end{align}

An observer's $a$'s and $a^*$'s are just linear functionals of the field. Therefore, the $a$'s and $a^*$'s of two different observers are linear combinations of each other. This may look at first like a Bogoliubov transformation, but it isn't: the commutation relations \eqref{eq:a_commutators} are not preserved. This is just another way of saying that the two observers have different symplectic forms, and thus different operator algebras.

\section{Translating operators between observers: the Wigner-Weyl transform} \label{sec:operators}

Now that we have a separate Hilbert space for each observer, how can we translate information between them? In this section, we will take a first step by proposing a translation recipe for operators between observers. In section \ref{sec:states}, we will augment this with a translation recipe for states. 

The different observers share the same space $\Gamma$ of boundary data, or, equivalently, of classical solutions (again, if we allow singular data, then this is true despite the removed points $p_{i,f}$). The observers can therefore agree on the notion of functionals $A[\Pi(\ell)]$ over $\Gamma$. Since they also agree on a notion of phase space volume, we can perform functional integrals over $\Gamma$. Thus, in particular, the observers agree on classical probability distributions on $\Gamma$.

Given an observer with his symplectic form, $\Gamma$ becomes a proper phase space. A functional $A[\Pi(\ell)]$ is then a functional over phase space, which can be translated into an operator on the observer's Hilbert space. To do this, we use the harmonic oscillator decomposition \eqref{eq:Pi} to write our functional as $A[c_{lm}(\omega),c^*_{lm}(\omega)]$ (the normalization \eqref{eq:a_c} of the oscillators is not essential here). We then expand $A[c_{lm}(\omega),c^*_{lm}(\omega)]$ in a Taylor series. It can now be reinterpreted as an operator, once we choose an operator ordering convention for each term. We pick symmetric ordering, since this is the convention on which all observers can agree. Indeed, on one hand, the $c$'s and $c^*$'s for two observers are just linear combinations of each other, and so symmetric ordering for one set is the same as symmetric ordering for another. On the other hand, the positive-frequency condition that distinguishes $c$'s from $c^*$'s is different between observers, so they cannot agree on the meaning of e.g. normal ordering.

The procedure described above for translating functionals $A[\Pi(\ell)]$ into operators $\hat A$ is the well-known Wigner-Weyl transform for quantum mechanics in phase space; see e.g. \cite{Wigner}. The transform also works in the opposite direction: starting with an operator $\hat A$, we expand it in symmetrized products of $\hat c$'s and $\hat c^\dagger$'s, and then reinterpret the result as a functional in the $c$'s and $c^*$'s.

Thus, our recipe for translating operators between observers is as follows. Starting from an operator $\hat A_1$ on one observer's Hilbert space, we perform the Wigner-Weyl transform to obtain a functional $A[\Pi(\ell)]$, which can then be transformed again into an operator $\hat A_2$ on the other observer's Hilbert space. See figure \ref{fig:operators}. This recipe preserves Hermiticity, since Hermitian operators simply map to real functionals. It preserves the trace $\tr\hat A$, since the latter can be computed by integrating the functional $A$ over $\Gamma$. Finally, it preserves the traced product $\tr(\hat A\hat B)$, since this can again be computed by integrating the product of the two functionals $AB$. More detailed properties of the operators, such as their spectrum or their positive-definiteness, are not preserved. The operator algebra is also not preserved: e.g. for operators linear in $\hat c$'s and $\hat c^\dagger$'s, the observers' different symplectic forms translate directly into different algebras. Finally, the traced product $\tr(\hat A\hat B\hat C\dots)$ of more than two operators is not preserved: one can still write this as an integral over $\Gamma$, but the product of functionals must now be written as a Moyal $\star$-product, which depends on the symplectic form. 
\begin{figure}[thp]
\centering
\includegraphics{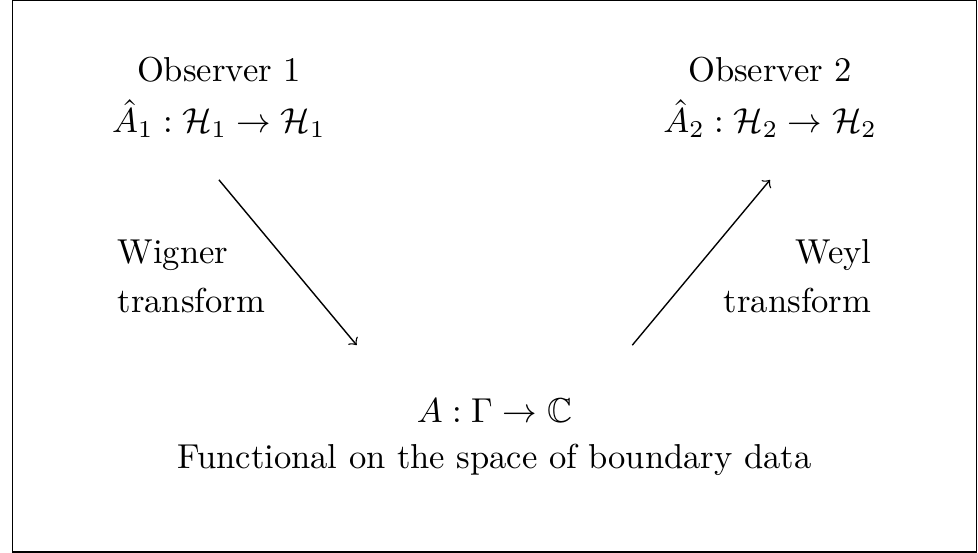}
\caption{Translation of operators between observers using the Wigner-Weyl transform.}
\label{fig:operators}
\end{figure}

Under our recipe, the field operator $\hat F(x)$ gets mapped to itself, being a linear combination of $\hat c$'s and $\hat c^\dagger$'s. The corresponding functional on the space of boundary data is the propagator \eqref{eq:F_Pi}. If we use observer-adapted coordinates, then of course $\hat F(x)$ is mapped to itself only after the appropriate coordinate transformation. A different behavior is exhibited by functionals $A[\Pi(\ell)]$ that are invariant under the de Sitter group $SO(4,1)$. Such a functional will appear the same in every observer's coordinates; as a result, the operator $\hat A$ will appear the same to every observer \emph{without a coordinate transformation}. In the next section, we study a particular case of this type.

\section{Hartle-Hawking functional} \label{sec:HH}

\subsection{The functional in observer-adapted coordinates}

An important example of a de Sitter-invariant functional is the exponent of the Euclidean on-shell action with boundary data $\Pi(\ell)$:
\begin{align}
 \Psi[\Pi(\ell)] \sim e^{-S_E[\Pi(\ell)]} \ , \label{eq:HH_functional}
\end{align}
where we left the normalization unspecified. In ordinary $dS_4$, this functional would be the wavefunction of the Bunch-Davies vacuum \cite{Bunch:1978yq}, or the Hartle-Hawking state \cite{Hartle:1983ai}, in the $\Pi(\ell)$ basis. Indeed, since we are in a free theory, on-shell action fully captures the path integral. In our $dS_4/\bbZ_2$ setup, the functional \eqref{eq:HH_functional} will eventually play the role of a thermal state. For now, let us simply translate it into an operator on some observer's Hilbert space.

The Euclidean action in \eqref{eq:HH_functional} can be calculated as the renormalized action in Euclidean $AdS_4$ (i.e. 4d hyperbolic space) with boundary data $\Pi(\ell)$ \cite{Maldacena:2002vr}. The Euclidean $AdS_4$ can be conveniently represented in the same 4+1d picture as $dS_4$, as the hyperboloid of e.g. future-pointing unit \emph{timelike} vectors. The two spacetimes then explicitly share the same asymptotic 3-sphere of null directions $\ell^\mu$. The appropriate action reads:
\begin{align}
 S_E = \int_{\calM}d^4x \sqrt{g}\left(-\frac{1}{2}(\nabla F)^2 + F^2 \right) + \int_{\del\calM} d^3x\sqrt{q}\left(\frac{1}{2}\dot F^2 + 2F\dot F + F^2 \right) \ . \label{eq:S_off_shell}
\end{align}
Here, $\calM$ is the $EAdS_4$ space truncated at a large but finite 3-sphere $\del\calM$. $\sqrt{g}$ is the $EAdS_4$ 4-volume density, $\sqrt{q}$ is the volume density on $\del\calM$, and $\dot F$ is the derivative of $F$ along an outgoing unit normal to $\del\calM$. The boundary terms in \eqref{eq:S_off_shell} are chosen so that the action's variation is proportional to $\Phi\delta\Pi$. The overall sign is chosen to make the on-shell action positive (see below). On-shell, the action \eqref{eq:S_off_shell} reads:
\begin{align}
 S_E[\Pi(\ell)] = -\frac{1}{2}\int_{\scri} d^3\ell\, \Phi_E(\ell)\Pi(\ell) \ , \label{eq:S_on_shell}
\end{align}
where $\Phi_E(\ell)$ is the Dirichlet boundary data induced by the Neumann data $\Pi(\ell)$ on $EAdS_4$ (recall that in the Lorentzian $dS_4/\bbZ_2$, the Dirichlet data $\Phi(\ell)$ vanishes). We can compute $\Phi_E(\ell)$ using the boundary 2-point function \cite{Witten:1998qj,Freedman:1998tz}: 
\begin{align}
 \Phi_E(\ell) = \int d^3\ell'\, G_E(\ell;\ell')\,\Pi(\ell') \quad ; \quad G_E(\ell;\ell') = \frac{1}{4\pi^2}\cdot\frac{1}{\ell\cdot\ell'} \ . \label{eq:Phi}
\end{align}

The Euclidean action \eqref{eq:S_on_shell} is thus quadratic in the boundary data $\Pi(\ell)$. In the oscillator decomposition \eqref{eq:Pi}, it is constrained by symmetry to take the form:
\begin{align}
 S_E[\Pi(\ell)] = \int_0^\infty\frac{d\omega}{2\pi}\sum_{lm} S_l(\omega)\left|c_{lm}(\omega)\right|^2 \ . \label{eq:S_form}
\end{align}
Using eqs. \eqref{eq:S_on_shell}-\eqref{eq:Phi}, we can compute the coefficients $S_l(\omega)$ as:
\begin{align}
 S_l(\omega)= \left\{\begin{array}{lc}
   \displaystyle \frac{1}{\omega} \tanh\frac{\omega\pi}{2} \prod_{k=1}^{l/2}\frac{(2 k-1)^2+\omega ^2}{(2 k)^2+\omega ^2} \quad & l\text{ even} \\
   \displaystyle \frac{\omega}{1+\omega^2} \coth\frac{\omega\pi}{2}\prod_{k=1}^{(l-1)/2}\frac{(2 k)^2+\omega ^2}{(2 k+1)^2+\omega ^2} \quad & l\text{ odd}
 \end{array}\right. \ . \label{eq:S_result}
\end{align}
The calculation is performed in Appendix \ref{app:Psi}. We conclude that in terms of the normalized oscillators \eqref{eq:a_c},\eqref{eq:N}, the Hartle-Hawking functional \eqref{eq:HH_functional} reads simply:
\begin{align}
 \Psi[\Pi(\ell)] \sim \exp\left(-2\int_0^\infty\frac{d\omega}{2\pi}\sum_{lm} \tanh\left(\frac{\pi}{2}(\omega + il)\right)\left|a_{lm}(\omega)\right|^2 \right) \ , \label{eq:HH_functional_result}
\end{align}
where we made use of the analytical continuation:
\begin{align}
 \tanh\left(x + \frac{l\pi i}{2}\right) = \left\{\begin{array}{cc}\tanh(x) \quad & l\text{ even}\\
    \coth(x) \quad & l \text{ odd}
  \end{array}\right. \ . \label{eq:tan_cot}
\end{align}

\subsection{Transforming into an operator on the observer's Hilbert space}

The Wigner-Weyl transform of Gaussians such as \eqref{eq:HH_functional_result} is well-known, since this is the form of the Wigner distribution for a thermal state:
\begin{align}
 W(a,a^*) \sim \exp\left(-2\tanh\frac{\beta\omega}{2}\left|a\right|^2 \right) \ \longleftrightarrow \ \hat W \sim \exp\left(-\beta\omega\,\hat a^\dagger\hat a \right) \ . \label{eq:thermal}
\end{align}
We use eq. \eqref{eq:thermal}, continued to complex $\beta$, as a purely mathematical statement: so far, the functional \eqref{eq:HH_functional_result} does not have the interpretation of a state. Applying \eqref{eq:thermal} to \eqref{eq:HH_functional_result}, we obtain the operator:
\begin{align}
 \hat\Psi \sim \hat R\, e^{-\pi\hat H} \ , \label{eq:HH_operator}
\end{align}
where $\hat H$ is the observer's Hamiltonian for $t$ translations:
\begin{align}
 \hat H = \int^\infty_0\frac{d\omega}{2\pi}\sum_{lm}\omega\, \hat{a}^\dagger_{lm}(\omega)\, \hat{a}_{lm}(\omega) \ ,
\end{align}
and $\hat R$ is the antipodal operator on the $(\theta,\varphi)$ 2-sphere:
\begin{align}
 \hat R = (-1)^{\hat L} \quad ; \quad \hat L = \int_{0}^{\infty}\frac{d\omega}{2\pi}\sum_{lm} l\,\hat{a}^{\dagger}_{lm}(\omega)\,\hat{a}_{lm}(\omega) \ .
\end{align}
$\hat R$ has eigenvalues $\pm 1$, depending on the number of odd-$l$ quanta. We notice that $\hat\Psi$ is almost, but not quite, the density operator of a thermal state. In particular, due to the factor of $\hat R$, it is not positive definite.

\section{Translating states between observers} \label{sec:states}

\subsection{From a global ``meta-state'' into states for individual observers}

In section \ref{sec:operators}, we've discussed the translation of operators between the different observers' Hilbert spaces. Now, the density matrix of a \emph{state} is a particular case of an operator. Can we, then, translate states between observers by directly applying the operator translation recipe? The answer is no: the translation of operators through the Wigner-Weyl transform will not preserve the non-negativity of eigenvalues, which is a necessary property for a density matrix. 

On the other hand, recall that the translation of operators does preserve Hermiticity, and any Hermitian operator can be made positive-semidefinite by taking the square. Notice also that the operator \eqref{eq:HH_operator}, which we derived from the Hartle-Hawking functional, becomes a thermal density matrix upon squaring. This leads us to propose the following recipe:
\begin{definition} \label{def:meta_to_state}
 Consider a real functional $\Psi: \Gamma\to\bbR$ on the space of boundary data. Such a functional is defined independently from any observer. Given an observer, we translate it into a density matrix $\hat\rho = \hat\Psi^2$ on the observer's Hilbert space, by first applying the Wigner-Weyl transform to get the Hermitian operator $\hat\Psi$, and then squaring this operator to ensure non-negative eigenvalues.
\end{definition}
Definition \ref{def:meta_to_state} postulates a functional $\Psi$ which encodes a ``God's eye'' view of the entire spacetime, while not being a state in any Hilbert space. From this ``meta-state'', we can deduce the (generally, mixed) state that would be visible to each observer. The procedure involves a loss of information, since we lose track of signs when squaring the operator $\hat\Psi$.

For the particular case when the ``meta-state'' is the Hartle-Hawking functional \eqref{eq:HH_functional}, our definition leads, through \eqref{eq:HH_operator}, to the following density matrix for each observer:
\begin{align}
 \hat\rho = \hat\Psi^2 \sim e^{-2\pi\hat H} \ . \label{eq:thermal_dS}
\end{align}
We recognize this as the thermal state at the de Sitter temperature $T = 1/2\pi$.

In the above, we haven't kept track of the normalization of our functionals and states. In particular, the trace of the density matrix \eqref{eq:thermal_dS} can be fixed to $1$ by appropriately normalizing the Hartle-Hawking functional \eqref{eq:HH_functional}. Importantly, this is always possible under Definition \ref{def:meta_to_state}. Indeed, the state's normalization $\tr\hat\rho = \tr(\hat\Psi^2)$ is preserved by the Wigner-Weyl transforms, being the traced product of two $\hat\Psi$'s. By normalizing the ``meta-state'' functional $\Psi$ such that $\Psi^2$ has a unit integral over $\Gamma$, we can ensure that the resulting density matrix for every observer has unit trace. Note that the squaring recipe $\hat\rho = \hat\Psi^2$ is special in this respect: if we attempted to make $\hat\Psi$ positive-semidefinite by e.g. raising it to the \emph{fourth} power, $\tr\hat\rho$ would not be preserved, since it would now be the traced product of more than two $\hat\Psi$'s.

In the semiclassical limit, all operator products reduce to ordinary products of functionals on $\Gamma$. Definition \ref{def:meta_to_state} then produces a classical probability distribution $\rho = \Psi^2$, on which all observers agree. On the other hand, for general quantum states, different observers will disagree on the expectation values of operators (which are in turn translated between the observers' Hilbert spaces through Wigner-Weyl transforms, as in section \ref{sec:operators}). This is because the expectation value $\tr(\hat\rho\hat A) = \tr(\hat\Psi^2\hat A)$ of an operator $\hat A$ is the traced product of \emph{three} operators that undergo Wigner-Weyl transforms. 

The disagreement on expectation values can be already be seen for the observer-independent thermal state \eqref{eq:thermal_dS}. Indeed, this state appears the same to each observer \emph{in his own coordinate system}. The same points in spacetime are assigned different coordinates by the different observers, leading to different correlation functions. Note that the state \eqref{eq:thermal_dS} is \emph{not} semiclassical, since it is near vacuum for high frequencies $\omega$.
\begin{figure}[thp]
\centering
\includegraphics{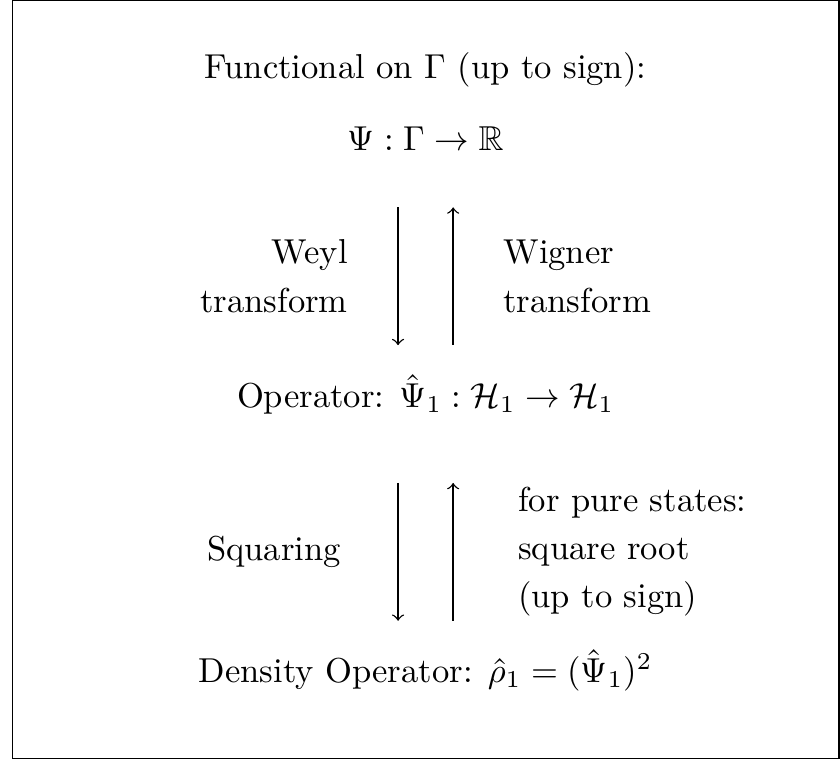}
\caption{Translation recipe for density operators: a functional $\Psi$ gives rise to an operator $\hat{\Psi}_1$ for a specific observer 1. The density operator is then calculated as its square $\hat{\rho}_1=(\hat{\Psi}_1)^2$. If $\hat{\rho}_1$ is pure, we can go backwards to reconstruct $\hat{\Psi}_1$ and $\Psi$ up to an overall sign ambiguity; we can then run the translation forward again, obtaining a mixed state $\hat\rho_2$ in another observer's Hilbert space.}
\label{fig:states}
\end{figure}

\subsection{From a pure state for one observer into a mixed state for another}

The operator squaring in Definition \ref{def:meta_to_state} is in general not invertible. However, it \emph{is} invertible, up to an overall sign, if the density matrix $\hat\rho$ has unit rank, i.e. if the state is pure. This allows us to formulate a translation recipe directly from one observer to another, when the first observer has the full knowledge of a pure state:
\begin{definition} \label{def:state_to_state}
 Consider a pure state, i.e. a rank-1 density matrix $\hat\rho_1$, on the Hilbert space of an observer. We can take the square root $\hat\Psi_1$ of $\hat\rho_1$, up to an overall sign. We then perform the Wigner-Weyl transform into a functional $\Psi$. The latter can be translated into a (generally mixed) state $\hat\rho_2$ for a second observer, following Definition \ref{def:meta_to_state}. The overall sign ambiguity disappears when squaring the operator on the second observer's side.
\end{definition}
As before, the state's normalization is preserved. On the other hand, the purity of the state is not, because the unit rank of the operator $\hat\Psi$ is not preserved by the Wigner-Weyl transforms. In particular, the purity of a state $\hat\rho$ can be measured via $\tr\hat\rho^2 = \tr\hat\Psi^4$; this is not preserved under the Wigner-Weyl transforms, because it is the traced product of more than two $\hat\Psi$'s. Again, we see that the state translation involves information loss. On one hand, if we start with a mixed state, we cannot translate it into another observer's Hilbert space, due to the ambiguous square root. On the other hand, if we start with a pure state, we can perform the translation through Definition \ref{def:state_to_state}, but this results in a mixed state for the other observer. The state translation procedures in Definitions \ref{def:meta_to_state}-\ref{def:state_to_state} are summarized in figure \ref{fig:states}.

\section{Discussion} \label{sec:discuss}

In this paper, we considered a concrete four-dimensional system -- a scalar field in elliptic de Sitter space $dS_4/\bbZ_2$ -- which exhibits the concepts of horizon complementarity. Through this system, we could explicitly study a situation where each observable region is associated with a Hilbert space, but these do not derive from a global Hilbert space for the entire world. We proposed a one-to-one translation recipe between operators on the different Hilbert spaces, based on the Wigner-Weyl transform. We then proposed a translation recipe for states, which comes in two versions. In Definition \ref{def:meta_to_state}, we have a global ``meta-state'' which descends into proper states for individual observers; in Definition \ref{def:state_to_state}, we have a pure state for one observer which gets translated into a mixed state for a second observer, with the ``meta-state'' appearing as an intermediate step. 

Our state-translation recipe preserves the normalization and positive-semidefiniteness of the density matrix, and yields observer-independent predictions in the semiclassical limit. On the other hand, in the quantum regime, it involves information loss, and assigns observer-dependent expectation values to operators. Remarkably, if one starts from the Hartle-Hawking functional \eqref{eq:HH_functional} as the ``meta-state'', one obtains the thermal state \eqref{eq:thermal_dS} at the de Sitter temperature for every observer. Moreover, this is the \emph{only} sensible observer-invariant Gaussian state that one can obtain. Indeed, $S_E[\Pi(\ell)]$ is the only $SO(4,1)$-invariant quadratic functional over the boundary data $\Pi(\ell)$. The remaining freedom in choosing the ``meta-state'' is to take $e^{-\alpha S_E}$ with a constant coefficient $\alpha\neq 1$. However, based on our results for $e^{-S_E}$, it is easy to find that the corresponding state will have diverging temperature $T_l(\omega)\sim \omega$ for the high-frequency modes.

It will be interesting to further investigate the information loss in our state-translation recipe. Can it be related more directly to fact that each observer cannot see his own horizon? We've argued above that the horizon's removal should not be seen as affecting the classical solution space. Similarly, we believe that the information loss at the quantum level should be associated not with the field value at the horizon itself, but with correlations across it: the main effect of removing the horizon is not the absence of a zero-measure set of points, but the change in the spacetime's connectivity. It would be interesting to probe this idea, and to see if it can be related to the notion of horizon entropy as entanglement entropy in the QFT (see e.g. \cite{Bianchi:2012ev}). One route to approach these issues may be to study one observer's vacuum, and work out the resulting mixed states for other observers. 

The general considerations in this paper, as well as the operator/state translation recipes, make sense for any QFT in $dS_4/\bbZ_2$. The exception is the calculation of the thermal state from the Hartle-Hawking functional. Our free massless conformally-coupled scalar \eqref{eq:F_diff} is special in that it has only one of the two types of boundary data -- in this case, the Neumann data $\Pi(\ell)$ -- non-vanishing on $\scri$. Changing the mass or adding interactions ruins this property. Without it, it is unclear which boundary data we should feed into the Hartle-Hawking functional \eqref{eq:HH_functional}, making our calculation ill-defined. On the other hand, free massless gauge fields \emph{do} have the property that only one type of asymptotic data is non-vanishing, for all values of the spin \cite{Neiman:2014npa}. One can therefore repeat our calculation for such fields, as well as for the antipodally-odd version of the scalar \eqref{eq:F_diff}. It would be interesting to know if the $T=1/2\pi$ thermal state is again obtained, and if so, to understand the mechanism behind this.

Unfortunately, our story cannot be generalized to gravitationally perturbed, asymptotically $dS_4/\bbZ_2$ spacetimes. This is because such spacetimes have closed timelike loops \cite{Gao:2000ga,Leblond:2002ns,Parikh:2002py}, which pure $dS_4/\bbZ_2$ avoids marginally. In particular, this prevents us from considering dynamical gravity in asymptotically $dS_4/\bbZ_2$.

There is, however, a tantalizing possibility to extend this paper's insights into a quantum-gravitational setting. That setting is Vasiliev's theory of higher-spin gravity \cite{Vasiliev:1999ba}. This is an interacting theory of massless gauge fields of all spins, along with a massless conformally-coupled scalar. As found recently in \cite{Neiman:2014npa}, this theory satisfies the property that only one type of boundary data for each field in $dS_4/\bbZ_2$ is non-vanishing on $\scri$, order by order in perturbation theory. One can then speak sensibly about the generalization of the (no longer Gaussian) Hartle-Hawking functional \eqref{eq:HH_functional}. Moreover, it is precisely this functional that appears in the holographic dS/CFT proposal \cite{Anninos:2011ui} for higher-spin gravity, as the dual of the CFT partition function. As for the issue of closed timelike loops, it is not obviously a problem in higher-spin gravity, since the interactions are very different from General Relativity, and appear to be non-local. Thus, we have a potential link between a global holographic model on $\scri$ and the thermal states inside observers' causal patches. This would be a great step forward for the application of dS/CFT to the observer-related puzzles in de Sitter space.

\section*{Acknowledgements}

We are grateful to Abhay Ashtekar and Laurent Freidel for discussions. Research at Perimeter Institute is supported by the Government of Canada through Industry Canada and by the Province of Ontario through the Ministry of Research \& Innovation. YN also acknowledges support of funding from NSERC Discovery grants. During the early stages of this work, YN was employed at Penn State University, supported in part by the NSF grant PHY-1205388 and the Eberly Research Funds of Penn State.

\appendix
\section{Normalization of harmonic oscillators on \texorpdfstring{$\scri$}{Scri}} \label{app:N}

In this Appendix, we calculate the normalization coefficients \eqref{eq:N} that relate the harmonic oscillators $(a_{lm}(\omega),a^*_{lm}(\omega))$ with commutation relations \eqref{eq:a_commutators} to the functions $(c_{lm}(\omega),c^*_{lm}(\omega))$ in the Fourier expansion \eqref{eq:Pi} of the boundary data $\Pi(\ell)$. We will do this by calculating the energy of a solution \eqref{eq:Pi} with respect to the observer's time-translation generator $\xi^\mu$, and comparing to eq. \eqref{eq:E}. To calculate the energy, we should integrate the stress tensor over a spatial slice in the static patch. In practice, it is easier to take the limit where the spatial slice becomes the observer's e.g. future horizon.

\subsection{Propagating the field to the horizon}

Let us find the solution $F(x)$ on the future horizon from the asymptotic Neumann data \eqref{eq:Pi}. We use the horizon coordinates $(\eta,\theta,\varphi)$ from section \ref{sec:geometry:coordinates}. From symmetry, we know that $F$ will take the form:
\begin{align}
 F(\eta,\theta,\varphi) 
  = \frac{1}{2}\int^\infty_0\frac{d\omega}{2\pi}\sum_{lm}\left(I_l(\omega)\, e^{-i\omega\eta}\, Y_{lm}(\theta,\varphi)\, c_{lm}(\omega) + c.c.\right) \, , \label{eq:F_I}
\end{align}
for some functions $I_l(\omega)$, where the factor of $1/2$ is chosen for later convenience. To find the coefficients $I_l(\omega)$, we can focus on an $m=0$ mode:
\begin{align}
 \Pi(t,\theta,\varphi) = e^{-i\omega t}\,Y_{l0}(\theta,\varphi)+c.c. = \sqrt{\frac{2l+1}{4\pi}}\,e^{-i\omega t}P_l(\cos\theta)+c.c. \ , \label{eq:single_mode}
\end{align}
where $P_l(u)$ are the Legendre polynomials. Plugging into \eqref{eq:F_I}, we get:
\begin{align}
 F(\eta,\theta,\varphi) = \frac{1}{2}\sqrt{\frac{2l+1}{4\pi}}\,I_l(\omega)\,e^{-i\omega\eta}P_l(\cos\theta) + c.c. \label{eq:F_horizon_l}
\end{align}
On the other hand, we can use the boundary-to-bulk propagator \eqref{eq:F_Pi}-\eqref{eq:propagator} to calculate the field at e.g. the horizon point $x^\mu = (e^\eta,e^\eta,1,0,0)$. For a boundary point parametrized by a null vector $\ell^\mu(t,\theta,\varphi)$ as in \eqref{eq:ell}, the propagator reads:
\begin{align}
 G(x;\ell) = \frac{1}{4\pi}\delta(x\cdot\ell) = \frac{1}{4\pi}\delta(\cos\theta - e^{\eta-t}) \ .
\end{align}
Integrating the propagator over the boundary, we get:
\begin{align}
 F(\eta,0,0) = \frac{1}{2}\sqrt{\frac{2l+1}{4\pi}}e^{-i\omega \eta}\int^1_0 du\, u^{i\omega-1}P_l(u) + c.c. \ . \label{eq:F_eta_00}
\end{align}
Comparing with \eqref{eq:F_horizon_l}, we identify the integral in \eqref{eq:F_eta_00} as our desired coefficient $I_l(\omega)$.

\subsection{Evaluating the integral \texorpdfstring{$I_l(\omega)$}{I(l,omega)}}

Let us now calculate the integral:
\begin{align}
 I_l(\omega) = \int^1_0 u^{i\omega -1}P_l(u)du \ . \label{eq:I_integral}
\end{align}
Using the standard recursion relation for Legendre polynomials:
\begin{align}
 (l+1)P_{l+1}(u) = (2l+1)uP_l(u) - lP_{l-1}(u) \ , \label{eq:Legendre_recursion}
\end{align}
we get the following recursion relation for $I_l(\omega)$:
\begin{align}
 I_{l+1}(\omega) = \frac{2l+1}{l+1}\,I_l(\omega-i) - \frac{l}{l+1}\,I_{l-1}(\omega) \ . \label{eq:I_recursion}
\end{align}
The initial values for this recursion relation can be calculated directly, using the regularization $u^{i\omega -1}\rightarrow u^{i\omega-1+\varepsilon}$ in the integral \eqref{eq:I_integral}:
\begin{align}
 I_0(\omega) = \frac{1}{i\omega} \quad ; \quad I_1(\omega) = \frac{1}{1+i\omega} \ . \label{eq:I_initial}
\end{align}
After calculating $I_l(\omega)$ for the first few $l$'s, we can guess the general expression:
\begin{align}
 I_l(\omega) = \left\{\begin{array}{lc}
    \displaystyle \frac{1}{i\omega} \prod_{k=1}^{l/2}\frac{\omega + (2k-1)i}{\omega - 2ki} \quad & l\text{ even}\\
    \displaystyle \frac{1}{1+i\omega}\prod_{k=1}^{(l-1)/2}\frac{\omega + 2ki}{\omega - (2k+1)i} \quad & l\text{ odd}
  \end{array}\right.\,. \label{eq:I}
\end{align}
This ansatz can be verified by checking that it satisfies the recursion relation \eqref{eq:I_recursion} and agrees with the initial values \eqref{eq:I_initial}.

\subsection{Energy of a given solution}

Having translated the field \eqref{eq:Pi} from $\scri$ to the future horizon $\mathcal{T}^+$, we are able to determine its energy. The stress energy tensor is given by:
\begin{align}
 T_{\mu\nu} = \del_\mu F\del_\nu F + g_{\mu\nu}(\dots) \ . \label{eq:T}
\end{align}
The piece proportional to $g_{\mu\nu}$ will not contribute, because we will contract with a null vector. The energy on the horizon is given by:
\begin{align}
 E = \int_{\mathcal{T}^+} T_{\mu\nu}\,\xi^\mu S^\nu d^3x \ .
\end{align}
Here, the 3d vector density $S^\mu$ is the horizon's area current, and $\xi^\mu$ is the observer's time-translation Killing field, which becomes null on the horizon. Both $S^\mu$ and $\xi^\mu$ point along the horizon's null normal. Specifically, $\xi^\mu$ generates translations in the null coordinate $\eta$. Using the expression \eqref{eq:T} for the stress tensor, this gives:
\begin{align}
 E = \int_{-\infty}^\infty d\eta\int d\Omega\left(\frac{\del F}{\del\eta}\right)^2
   = \frac{1}{2}\int_0^\infty\frac{d\omega}{2\pi}\sum_{lm}\,\omega^2|I_l(\omega)|^2\,|c_{lm}(\omega)|^2\,.
\end{align}
Using the expression \eqref{eq:I} for $I_l(\omega)$ and comparing with eq. \eqref{eq:E}, we obtain the normalization coefficients \eqref{eq:N} for the harmonic oscillators.

\section{Euclidean action in the harmonic oscillator basis} \label{app:Psi}

In this Appendix, we prove the expression \eqref{eq:S_result} for the expansion coefficients of the Euclidean on-shell action \eqref{eq:S_on_shell} in an observer's oscillator basis. 

\subsection{An integral expression}

The coefficients in the action's expansion \eqref{eq:S_form} can be identified with coefficients in the asymptotic Dirichlet data $\Phi_E(\ell)$ that is induced in $EAdS_4$ by our Neumann data \eqref{eq:Pi}. Indeed, from symmetry, $\Phi_E$ must take the form:
\begin{align}
 \Phi_E(t,\theta,\varphi) = -\int^\infty_0\frac{d\omega}{2\pi}\sum_{lm}\left(S_l(\omega)\,e^{-i\omega t}\,Y_{lm}(\theta,\varphi)\,c_{lm}(\omega) + c.c.\right)\,, \label{eq:Phi_B}
\end{align}
for some unknown functions $S_l(\omega)$. We will see below that these functions are real. One can then plug them into the expression \eqref{eq:S_on_shell} for the action, and find that they coincide with the $S_l(\omega)$ in the action's expansion \eqref{eq:S_form}. To find these functions explicitly, we again restrict to the $\Pi(\ell)$ configuration \eqref{eq:single_mode} with a single $m=0$ mode. Plugging into \eqref{eq:Phi_B}, we get:
\begin{align}
 \Phi_E(t,\theta,\varphi) = -\sqrt{\frac{2l+1}{4\pi}}\,e^{-i\omega t}P_l(\cos\theta)S_l(\omega)+c.c. \ .
\end{align}
On the other hand, we can evaluate $\Phi_E$ explicitly using the 2-point function \eqref{eq:Phi}, e.g. at the point $(t,\theta,\varphi) = (t,0,0)$. The 2-point function reads:
\begin{align}
 G_E(\ell;\ell') = \frac{1}{4\pi^2}\cdot\frac{1}{\ell\cdot\ell'} = \frac{1}{4\pi^2(\cos\theta'-\cosh(t-t'))} \ ,
\end{align}
where we used the parametrization \eqref{eq:ell} for the boundary points $\ell(t,0,0)$ and $\ell'(t',\theta',\varphi')$. Integrating this over $\ell'$, we get:
\begin{align}
 \Phi_E(t,0,0) = \frac{1}{4\pi^2}\sqrt{\frac{2l+1}{4\pi}}\int^\infty_{-\infty}dt'\int d\Omega'\,\frac{e^{-i\omega t'}P_l(\cos\theta)}{\cos \theta'-\cosh(t-t')}+c.c. \ .
\end{align}
The integration over $\varphi'$ gives a factor of $2\pi$. We then perform the substitutions $u=\cos\theta'$ and $\tau=t'-t$, which give:
\begin{align}
 \Phi_E(t,0,0) = \frac{1}{2\pi}\sqrt{\frac{2l+1}{4\pi}}\,e^{-i\omega t}\int^\infty_{-\infty}d\tau\int^1_{-1} du\,\frac{e^{-i\omega\tau}P_l(u)}{u-\cosh\tau}+c.c. \ , 
\end{align}
from which we read off:
\begin{align}
 S_l(\omega) = \frac{1}{2\pi}\int^\infty_{-\infty}d\tau\int^1_{-1} du\,\frac{e^{-i\omega\tau}P_l(u)}{\cosh\tau - u} \ , \label{eq:S_l_integral}
\end{align}
which is indeed real after the $\tau$ integration.

\subsection{Evaluating the integral}

Let us now prove that the integral \eqref{eq:S_l_integral} evaluates to \eqref{eq:S_result}. First, for any function $f(\omega)$, we define the Fourier transform and its inverse as:
\begin{align}
 \widetilde f(\tau):=\int_{-\infty}^{\infty}f(\omega)e^{i\omega\tau}d\omega \quad ; \quad 
 f(\omega)=\frac{1}{2\pi}\int_{-\infty}^{\infty}\widetilde f(\tau)e^{-i\omega\tau}d\tau \ .
\end{align}
Thus, the Fourier transform of \eqref{eq:S_l_integral} reads:
\begin{align}
 \widetilde S_l(\tau) = \int^1_{-1} \frac{P_l(u)\,du}{\cosh\tau - u} \ . \label{eq:Fourier_integral}
\end{align}
It remains to prove that this coincides with the Fourier transform of \eqref{eq:S_result}. We will do this by recursion in $l$. For $l=0,1$, we can evaluate \eqref{eq:Fourier_integral} explicitly, as:
\begin{align}
 \widetilde S_0(\tau) = 2\log\left(\coth\frac{\tau}{2}\right) \quad ; \quad
 \widetilde S_1(\tau) = 2\cosh\tau\log\left(\coth\frac{\tau}{2}\right)-2 \ . \label{eq:S_initial}
\end{align}
We can then use the recursion relation \eqref{eq:Legendre_recursion} for the Legendre polynomials, along with the identity (for $l>0$):
\begin{align}
\int^1_{-1}\frac{uP_l(u)\,du}{\cosh\tau-u}=\cosh\tau\,\int^1_{-1}\frac{P_l(u)\,du}{\cosh\tau-u} \ ,
\end{align}
to get the following recursion relation for the integrals \eqref{eq:Fourier_integral}:
\begin{align}
 (l+1)\widetilde S_{l+1}(\tau)=(2l+1)\cosh\tau\,\widetilde S_l(\tau)-l\,\widetilde S_{l-1}(\tau)\,. \label{eq:S_recursion}
\end{align}

It remains to show that the Fourier transform of \eqref{eq:S_result} also satisfies the recursion relation \eqref{eq:S_recursion} and the initial conditions \eqref{eq:S_initial}. Let us denote the expression \eqref{eq:S_result} as $\underline{S_l}(\omega)$, to distinguish it from the expression \eqref{eq:S_l_integral} with which we are comparing. We can calculate the Fourier transform of $\underline{S_l}(\omega)$ explicitly via Cauchy's theorem. 

First, we notice that $\underline{S_l}(\omega)$ is symmetric in $\omega$, ensuring that $\underline{\widetilde S_l}(\tau)$ is symmetric in $\tau$, as it should be to agree with \eqref{eq:Fourier_integral}. We can therefore restrict to $\tau>0$. To compute the Fourier transform, we notice that $\underline{S_l}(\omega)$ is a meromorphic function, with behavior $\underline{S_l}(\omega)\sim \omega^{-1}$ for $|\omega|\to\infty$ in the complex plane. Therefore, we have:
\begin{align}
 \underline{\widetilde S_l}(\tau) = \int^\infty_{-\infty}\underline{S_l}(\omega)\,e^{i\omega\tau}d\omega=\int_{\mathcal{C}}\underline{S_l}(\omega)\,e^{i\omega\tau}d\omega \ ,
 \label{eq:Fourier_transform_contour}
\end{align}
where $\mathcal{C}$ is a closed half-circle around the upper half of the complex plane. $\underline{S_l}(\omega)$ has only simple poles, located at $\omega=(2n+1)i$ for even $l$ and at $\omega=2ni$ for odd $l$, where $n\in\bbZ$. Summing up the residues in the upper half-plane, we find that the Fourier transform \eqref{eq:Fourier_transform_contour} is given by:
\begin{align}
 \underline{\widetilde S_l}(\tau) = \sum^\infty_{n=0}\left\{\begin{array}{lc}
  \displaystyle \frac{4e^{-(2n+1)\tau}}{2n+1} \prod_{k=1}^{l/2}\frac{(2 k-1)^2-(2n+1)^2}{(2 k)^2-(2n+1)^2} \quad & l \text{ even}\\
  \displaystyle \frac{8ne^{-2n\tau}}{4n^2-1} \prod_{k=1}^{(l-1)/2}\frac{(2 k)^2-(2n)^2}{(2 k+1)^2-(2n)^2} \quad & l \text{ odd}
 \end{array}\right. \ . \label{eq:Fourier_transform_result}
\end{align}\par
We can now explicitly check that this satisfies the recursion relation \eqref{eq:S_recursion} with initial conditions \eqref{eq:S_initial}. Checking the recursion relation is a straightforward exercise. For the initial conditions, we find:
\begin{align}
 \begin{split}
   \underline{\widetilde S_0}(\tau) &= \sum^\infty_{n=0}\frac{4e^{-(2n+1)\tau}}{2n+1} = 4\artanh(e^{-\tau}) \ ; \\
   \underline{\widetilde S_1}(\tau)&=\sum^\infty_{n=0}\frac{8ne^{-2n\tau}}{4n^2-1} = 4\cosh\tau\artanh(e^{-\tau}) - 2 \ .
\end{split}
\end{align}
For $\tau>0$, we have $2\artanh(e^{-\tau}) = \log\left(\coth\frac{\tau}{2}\right)$, which proves that $\underline{\widetilde S_l}(\tau)$ indeed satisfies the initial conditions \eqref{eq:S_initial}.


\begin{thebibliography}{99}

\bibitem{Susskind:1993if} 
  L.~Susskind, L.~Thorlacius and J.~Uglum,
  ``The Stretched horizon and black hole complementarity,''
  Phys.\ Rev.\ D {\bf 48}, 3743 (1993)
  [hep-th/9306069].
  
\bibitem{Dyson:2002pf} 
  L.~Dyson, M.~Kleban and L.~Susskind,
  ``Disturbing implications of a cosmological constant,''
  JHEP {\bf 0210}, 011 (2002)
  [hep-th/0208013].
  
\bibitem{Hawking:1976ra} 
  S.~W.~Hawking,
  ``Breakdown of Predictability in Gravitational Collapse,''
  Phys.\ Rev.\ D {\bf 14}, 2460 (1976).
  
\bibitem{Parikh:2002py} 
  M.~K.~Parikh, I.~Savonije and E.~P.~Verlinde,
  ``Elliptic de Sitter space: dS/Z(2),''
  Phys.\ Rev.\ D {\bf 67}, 064005 (2003)
  [hep-th/0209120].

\bibitem{Parikh:2004ux} 
  M.~K.~Parikh and E.~P.~Verlinde,
  ``De sitter space with finitely many states: A Toy story,''
  hep-th/0403140.

\bibitem{Parikh:2004wh} 
  M.~K.~Parikh and E.~P.~Verlinde,
  ``De Sitter holography with a finite number of states,''
  JHEP {\bf 0501}, 054 (2005)
  [hep-th/0410227].

\bibitem{Neiman:2014npa} 
  Y.~Neiman,
  ``Antipodally symmetric gauge fields and higher-spin gravity in de Sitter space,''
  JHEP {\bf 1410}, 153 (2014)
  [arXiv:1406.3291 [hep-th]].

\bibitem{Gibbons:1977mu} 
  G.~W.~Gibbons and S.~W.~Hawking,
  ``Cosmological Event Horizons, Thermodynamics, and Particle Creation,''
  Phys.\ Rev.\ D {\bf 15}, 2738 (1977).
  
\bibitem{Ng:2012xp} 
  G.~S.~Ng and A.~Strominger,
  ``State/Operator Correspondence in Higher-Spin dS/CFT,''
  Class.\ Quant.\ Grav.\  {\bf 30}, 104002 (2013)
  [arXiv:1204.1057 [hep-th]].

\bibitem{Wigner} 
  W.~B.~Case,
  ``Wigner functions and Weyl transforms for pedestrians,''
  Am.\ J.\ Phys.\  {\bf 76}, 937 (2008).

\bibitem{Bunch:1978yq} 
  T.~S.~Bunch and P.~C.~W.~Davies,
  ``Quantum Field Theory in de Sitter Space: Renormalization by Point Splitting,''
  Proc.\ Roy.\ Soc.\ Lond.\ A {\bf 360}, 117 (1978).
  
\bibitem{Hartle:1983ai} 
  J.~B.~Hartle and S.~W.~Hawking,
  ``Wave Function of the Universe,''
  Phys.\ Rev.\ D {\bf 28}, 2960 (1983).
  
\bibitem{Maldacena:2002vr} 
  J.~M.~Maldacena,
  ``Non-Gaussian features of primordial fluctuations in single field inflationary models,''
  JHEP {\bf 0305}, 013 (2003)
  [astro-ph/0210603].
  
\bibitem{Witten:1998qj} 
  E.~Witten,
  ``Anti-de Sitter space and holography,''
  Adv.\ Theor.\ Math.\ Phys.\  {\bf 2}, 253 (1998)
  [hep-th/9802150].

\bibitem{Freedman:1998tz} 
  D.~Z.~Freedman, S.~D.~Mathur, A.~Matusis and L.~Rastelli,
  ``Correlation functions in the CFT(d) / AdS(d+1) correspondence,''
  Nucl.\ Phys.\ B {\bf 546}, 96 (1999)
  [hep-th/9804058].
  
\bibitem{Bianchi:2012ev} 
  E.~Bianchi and R.~C.~Myers,
  ``On the Architecture of Spacetime Geometry,''
  Class.\ Quant.\ Grav.\  {\bf 31}, no. 21, 214002 (2014)
  [arXiv:1212.5183 [hep-th]].
  
\bibitem{Gao:2000ga} 
  S.~Gao and R.~M.~Wald,
  ``Theorems on gravitational time delay and related issues,''
  Class.\ Quant.\ Grav.\  {\bf 17}, 4999 (2000)
  [gr-qc/0007021].
  
\bibitem{Leblond:2002ns} 
  F.~Leblond, D.~Marolf and R.~C.~Myers,
  ``Tall tales from de Sitter space 1: Renormalization group flows,''
  JHEP {\bf 0206}, 052 (2002)
  [hep-th/0202094].
  
\bibitem{Vasiliev:1999ba} 
  M.~A.~Vasiliev,
  ``Higher spin gauge theories: Star product and AdS space,''
  In *Shifman, M.A. (ed.): The many faces of the superworld* 533-610
  [hep-th/9910096].

\bibitem{Anninos:2011ui} 
  D.~Anninos, T.~Hartman and A.~Strominger,
  ``Higher Spin Realization of the dS/CFT Correspondence,''
  arXiv:1108.5735 [hep-th].
  
\end{thebibliography}
\end{document}